\documentclass[article, 10pt, conference]{ieeeconf}
\IEEEoverridecommandlockouts                           

\usepackage{graphics} 
\usepackage{amsmath} 
\usepackage{amssymb}  
\usepackage{theorem}
\usepackage{color,xspace}
\usepackage{graphicx}
\usepackage{subcaption}
\usepackage[noadjust]{cite}
\usepackage{cleveref}
\usepackage{bm}
\usepackage{bbm}
\usepackage{tikz}
\usepackage{epstopdf} 
\usepackage{algorithm}
\usepackage{algpseudocode}
 \usepackage{booktabs}
 
\usepackage{enumitem}

\newtheorem{theorem}{Theorem}
\newtheorem{lemma}{Lemma}

\newtheorem{definition}{Definition}

\def\bbx{{\ensuremath{\mathbf x}}}

\def\prob{\mathbf{Pr}}
\def\E{\mathbf{E}}
\def\Cov{\mathbf{Cov}}

\newcommand*{\Scale}[2][4]{\scalebox{#1}{$#2$}}%

\title{\LARGE \bf Probabilistic Verification and Reachability Analysis of \\ Neural Networks via Semidefinite Programming}
\author{Mahyar Fazlyab, Manfred Morari, George J. Pappas
\thanks{$^\dagger$Corresponding author: mahyarfa@seas.upenn.edu. This work was supported by DARPA Assured Autonomy and NSF CPS 1837210. The authors are with the Department of Electrical and Systems Engineering, University of Pennsylvania. Email: \{mahyarfa, morari, pappasg\}@seas.upenn.edu.}}

\begin{document}

\maketitle
\thispagestyle{empty}
\pagestyle{empty}

\begin{abstract}
	Quantifying the robustness of neural networks or verifying
	their safety properties against input uncertainties or
	adversarial attacks have become an important research area
	in learning-enabled systems. Most results concentrate
	around the worst-case scenario where the input of the
	neural network is perturbed within a norm-bounded
	uncertainty set. In this paper, we consider a probabilistic
	setting in which the uncertainty is random with known first
	two moments. In this context, we discuss two relevant
	problems: (i) probabilistic safety verification, in which the goal is to find an upper bound on
	the probability of violating a safety specification; and
	(ii) confidence ellipsoid estimation, in which given a
	confidence ellipsoid for the input of the neural network,
	our goal is to compute a confidence ellipsoid for the
	output. Due to the presence of nonlinear activation
	functions, these two problems are very difficult to solve
	exactly. To simplify the analysis, our main idea is to
	abstract the nonlinear activation functions by a
	combination of affine and quadratic constraints they impose
	on their input-output pairs.  We then show that the safety
	of the abstracted network, which is sufficient for the
	safety of the original network, can be analyzed using
	semidefinite programming. We illustrate the performance of
	our approach with numerical experiments.
	
\end{abstract}

\section{Introduction}
Neural Networks (NN) have been very successful in various applications such as end-to-end learning for self-driving cars \cite{bojarski2016end}, learning-based controllers in robotics \cite{shi2018neural}, speech recognition, and image classifiers. Their vulnerability to input uncertainties and adversarial attacks, however, refutes the deployment of neural networks in safety critical applications. In the context of image classification, for example, it has been shown in several works \cite{zheng2016improving,moosavi2017universal,su2019one} that even adding an imperceptible noise to the input of neural network-based classifiers can completely change their decision.  In this context, verification refers to the process of checking whether the output of a trained NN satisfies certain desirable properties when its input is perturbed within an uncertainty model. More precisely, we would like to verify whether the neural network's prediction remains the same in a neighborhood of a test point $x^\star$. This neighborhood can represent, for example, the set of input examples that can be crafted by an adversary. 


In \emph{worst-case} safety verification, we assume that the input uncertainty is bounded and we verify a safety property for all possible perturbations within the uncertainty set. This approach has been pursued extensively in several works using various tools, such as mixed-integer linear programming \cite{bastani2016measuring,lomuscio2017approach,tjeng2017evaluating}, robust optimization and duality theory \cite{kolter2017provable,dvijotham2018dual}, Satisfiability Modulo Theory (SMT) \cite{pulina2012challenging}, dynamical systems \cite{ivanov2018verisig,xiang2018output}, Robust Control \cite{fazlyab2019safety}, Abstract Interpretation \cite{mirman2018differentiable} and many others \cite{hein2017formal,wang2018efficient}.

In \emph{probabilistic} verification, on the other hand, we assume that the input uncertainty is random but potentially unbounded. Random uncertainties can emerge as a result of, for example, data quantization, input preprocessing, and environmental background noises \cite{weng2018proven}. In contrast to the worst-case approach, there are only few works that have studied verification of neural networks in probabilistic settings \cite{weng2018proven,dvijotham2018verification,bibi2018analytic}.  In situations where we have random uncertainty models, we ask a related question: ``Can we provide statistical guarantees on the output of neural networks when their input  is perturbed with a random noise?" 
In this paper, we provide an affirmative answer by addressing two related problems:
\begin{itemize}[leftmargin=*]
	\item \emph{Probabilistic Verification:} Given a safe region in the output space of the neural network, our goal is estimate the probability that the output of the neural network will be in the safe region when its input is perturbed by a random variable with a known mean and covariance. 
	\item \emph{Confidence propagation:} Given a confidence ellipsoid on the input of the neural network, we want to estimate the output confidence ellipsoid.
\end{itemize}
The rest of the paper is organized as follows. In Section \ref{subsec: output reachable set estimation}, we discuss safety verification of neural networks in both deterministic and probabilistic settings. In Section \ref{sec: Problem Relaxation via Quadratic Constraints}, we provide an abstraction of neural networks using the formalism of quadratic constraints. In Section \ref{sec: Analysis of the Relaxed Network via Semidefinite Programming} we develop a convex relaxation to the problem of confidence ellipsoid estimation. In Section \ref{sec: numerical experiments}, we present the numerical experiments. Finally, we draw our conclusions in Section \ref{sec: conclusions}.

\subsection{Notation and Preliminaries}
We denote the set of real numbers by $\mathbb{R}$, the set of real $n$-dimensional vectors by $\mathbb{R}^n$, the set of $m\times n$-dimensional matrices by $\mathbb{R}^{m\times n}$, and the $n$-dimensional identity matrix by $I_n$. We denote by $\mathbb{S}^{n}$, $\mathbb{S}_{+}^n$, and $\mathbb{S}_{++}^n$ the sets of $n$-by-$n$ symmetric, positive semidefinite, and positive definite matrices, respectively. We denote ellipsoids in $\mathbb{R}^n$ by
$$
\mathcal{E}(x_c,P) = \{x \mid (x-x_c)^\top P^{-1}(x-x_c) \leq 1 \},
$$
where $x_c \in \mathbb{R}^n$ is the center of the ellipsoid and $P \in \mathbb{S}_{++}^n$ determines its orientation and volume. We denote the mean and covariance of a random variable $X \in \mathbb{R}^n$ by $\E[X] \in \mathbb{R}^n$ and $\Cov[X] \in \mathbb{S}_{+}^n$, respectively.
%
%
\section{Safety Verification of Neural Networks} \label{subsec: output reachable set estimation}
\subsection{Deterministic Safety Verification} \label{subsec: Problem Statement}
Consider a multi-layer feed-forward fully-connected neural network described by the following equations,
\begin{align} \label{eq: DNN model 0}
x^0 &=x \\ \nonumber 
x^{k+1} &=\phi(W^k x^k + b^k) \quad k=0, \cdots, \ell-1 \\ \nonumber 
f(x) &= W^\ell x^\ell + b^{\ell},
\end{align}
where $x^0 = x$ is the input to the network, $W^{k} \in \mathbb{R}^{n_{k+1}\times n_k}, \ b^k \in \mathbb{R}^{n_{k+1}}$ are the weight matrix and bias vector of the $k$-th layer. The nonlinear activation function $\phi(\cdot)$ (Rectified Linear Unit (ReLU), sigmoid, tanh, leaky ReLU, etc.) is applied coordinate-wise to the pre-activation vectors, i.e., it is of the form
\begin{align} \label{eq: repeated nonlinearity DNN}
\phi(x) = [\varphi(x_1) \ \cdots \ \varphi(x_{d})]^\top,
\end{align}
where $\varphi$ is the activation function of each individual neuron. Although our framework is applicable to all activation functions, we focus our attention to ReLU activation functions, $\varphi(x)=\max(x,0)$. 


In deterministic safety verification, we are given a bounded set $\mathcal{X} \subset \mathbb{R}^{n_x}$ of possible inputs (the uncertainty set), which is mapped by the neural network to the output reachable set $f(\mathcal{X})$.
%
%
The desirable properties that we would like to verify can often be described by a set $\mathcal{S} \subset \mathbb{R}^{n_y}$ in the output space of the neural network, which we call the safe region. In this context, the network is safe if $f(\mathcal{X}) \subseteq \mathcal{S}$. 


%

\subsection{Probabilistic Safety Verification}
In a deterministic setting, reachability analysis and safety verification is a yes/no problem whose answer does not quantify the proportion of inputs for which the safety is violated. Furthermore, if the uncertainty is random and potentially unbounded, the output $f(x)$ would satisfy the safety constraint only with a certain probability. More precisely, given a safe region $\mathcal{S}$ in the output space of the neural network, we are interested in finding the probability that the neural network maps the random input $X$ to the safe region, 
$$
\prob(f(X) \in \mathcal{S}).
$$
Since $f(x)$ is a nonlinear function, computing the distribution of $f(X)$ given the distribution of $X$ is prohibitive, except for special cases. As a result, we settle for providing a \emph{lower bound} $p\in (0,1)$ on the desired probability,
$$
\prob(f(X) \in \mathcal{S})  \geq p.
$$
%
%
To compute the lower bound, we adopt a geometrical approach, in which we verify whether the reachable set of a confidence region of the input lies entirely in the safe set $\mathcal{S}$. We first recall the definition of a confidence region.
\begin{definition}[Confidence region]
	The $p$-level ($p \in [0,1]$) confidence region of a vector random variable $X \in \mathbb{R}^n$ is defined as any set $\mathcal{E}_p \subseteq \mathbb{R}^n$ for which
	$
	\prob(X \in \mathcal{E}_p) \geq p.
	$
\end{definition}
Although confidence regions can have different representations, our particular focus in this paper is on ellipsoidal confidence regions. Due to their appealing geometric properties (e.g., invariance to affine subspace transformations), ellipsoids are widely used in robust control to compute reachable sets \cite{wabersich2018linear,van2002conic,cannon2011stochastic}.
%

The next two lemmas characterize confidence ellipsoids for Gaussian random variables and random variables with known first two moments.
\begin{lemma}
	Let $X \sim \mathcal{N}(\mu,\Sigma)$ be an $n$-dimensional Gaussian random variable. Then the $p$-level confidence region of $X$ is given by the ellipsoid
	\begin{align} \label{eq: input confidence ellipsoid}
	\mathcal{E}_{p}= \{x \mid (x-\mu)^\top \Sigma^{-1}(x-\mu) \leq \chi^2_{n}(p) \},
	\end{align}
	where $\chi^2_{n}(p)$ is the quantile function of the chi-squared distribution with $n$ degrees of freedom. 
\end{lemma}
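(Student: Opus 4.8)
The plan is to show that the set defined in \eqref{eq: input confidence ellipsoid} collects exactly the probability mass $p$ under the Gaussian law, which reduces to computing the distribution of the quadratic form $Q = (X-\mu)^\top \Sigma^{-1}(X-\mu)$. First I would standardize the random variable: since $\Sigma \in \mathbb{S}_{++}^n$, it admits a symmetric positive-definite square root $\Sigma^{1/2}$, so I set $Z = \Sigma^{-1/2}(X-\mu)$. A standard property of the Gaussian family (invariance under affine transformations) gives $Z \sim \mathcal{N}(0,I_n)$, i.e.\ $Z$ is a vector of $n$ independent standard normal coordinates.

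The key observation is then that $Q = (X-\mu)^\top \Sigma^{-1}(X-\mu) = Z^\top Z = \sum_{i=1}^n Z_i^2$. By the very definition of the chi-squared distribution, a sum of squares of $n$ independent standard normal variables is distributed as $\chi^2_n$. Writing $F_{\chi^2_n}$ for its cumulative distribution function and $\chi^2_n(p) = F_{\chi^2_n}^{-1}(p)$ for the quantile function, I would then compute directly
\begin{align*}
\prob\!\left(X \in \mathcal{E}_p\right) &= \prob\!\left((X-\mu)^\top \Sigma^{-1}(X-\mu) \leq \chi^2_n(p)\right) \\
&= \prob\!\left(Q \leq \chi^2_n(p)\right) = F_{\chi^2_n}\!\left(\chi^2_n(p)\right) = p,
\end{align*}
where the last equality uses that the chi-squared CDF is continuous and strictly increasing on its support, so it inverts the quantile function exactly. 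Since $\prob(X \in \mathcal{E}_p) = p \geq p$, the set $\mathcal{E}_p$ indeed satisfies the definition of a $p$-level confidence region, which is what we wanted to establish.

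There is no genuinely hard step here; the proof is essentially a change of variables combined with the definition of the $\chi^2_n$ distribution. The only points requiring a small amount of care are the justification that $Z$ is standard Gaussian (invoking the affine-invariance of the multivariate normal, which also relies on $\Sigma$ being invertible so that $\Sigma^{-1/2}$ is well defined), and the continuity of $F_{\chi^2_n}$ that makes the quantile inversion exact rather than merely an inequality. I would state these two facts explicitly and otherwise keep the argument to the short chain of equalities above.
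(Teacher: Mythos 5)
Your proof is correct and complete: standardizing via $Z = \Sigma^{-1/2}(X-\mu)$, recognizing $Z^\top Z \sim \chi^2_n$, and inverting the (continuous, strictly increasing) CDF is exactly the standard argument. The paper states this lemma without proof, treating it as a known fact, and your argument is precisely the one it implicitly relies on; the points of care you flag (invertibility of $\Sigma$ and exactness of the quantile inversion) are the right ones.
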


For non-Gaussian random variables, we can use Chebyshev's inequality to characterize the confidence ellipsoids, if we know the first two moments.
\begin{lemma} \label{lem: confidence ellipsoid non-Gaussian}
	Let $X$ be an $n$-dimensional random variable with $\E[X]=\mu$ and $\Cov[X]=\Sigma$. Then the ellipsoid
	\begin{align} \label{eq: input confidence ellipsoid 1}
	\mathcal{E}_p= \{x \mid (x-\mu)^\top \Sigma^{-1}(x-\mu) \leq \dfrac{n}{1-p} \},
	\end{align}
	is a $p$-level confidence region of $X$.
\end{lemma}
\begin{lemma} \label{lem: lower bound on safety}
	Let $\mathcal{E}_p$ be a confidence region of a random variable $X$. If $f(\mathcal{E}_p) \subseteq \mathcal{S}$, then $\mathcal{S}$ is a $p$-level confidence region for the random variable $f(X)$, i.e., 
	$\prob(f(X) \in \mathcal{S}) \geq p$.
\end{lemma}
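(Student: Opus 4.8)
The plan is to reduce the claim to the monotonicity of the probability measure, so that no properties of $f$ beyond the hypothesized set inclusion are needed. First I would invoke the definition of a confidence region: since $\mathcal{E}_p$ is a $p$-level confidence region of $X$, we have $\prob(X \in \mathcal{E}_p) \geq p$ directly from the definition.

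The central step is to establish a containment between two events in the underlying sample space. I would argue pointwise: for any outcome at which the realization of $X$ lies in $\mathcal{E}_p$, its image under $f$ lies in $f(\mathcal{E}_p)$ by the definition of the set image, and hence in $\mathcal{S}$ by the hypothesis $f(\mathcal{E}_p) \subseteq \mathcal{S}$. Consequently the event $\{X \in \mathcal{E}_p\}$ is a subset of the event $\{f(X) \in \mathcal{S}\}$.

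Finally, I would apply monotonicity of probability to this event inclusion, which yields
$$
\prob(f(X) \in \mathcal{S}) \geq \prob(X \in \mathcal{E}_p) \geq p,
$$
completing the argument.

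I do not expect a genuine obstacle, since the statement is essentially a restatement of set-theoretic monotonicity transported through $f$. The only technical point worth flagging is the measurability of the event $\{f(X) \in \mathcal{S}\}$, which is needed for the probability to be well defined; this holds because $f$ is piecewise affine, hence continuous, and $\mathcal{S}$ may be taken to be Borel. With that understood, the chain of inequalities above goes through unchanged, and the result follows.
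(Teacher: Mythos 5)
Your proof is correct and follows essentially the same route as the paper's: both reduce the claim to monotonicity of probability applied to the event inclusion induced by $f(\mathcal{E}_p) \subseteq \mathcal{S}$. If anything, your version is marginally cleaner, since you pass directly from $\{X \in \mathcal{E}_p\}$ to $\{f(X) \in \mathcal{S}\}$ rather than routing through the intermediate event $\{f(X) \in f(\mathcal{E}_p)\}$ as the paper does, which also sidesteps any question about the measurability of the image set $f(\mathcal{E}_p)$.
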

\begin{proof}
	The inclusion $f(\mathcal{E}_{p}) \subseteq \mathcal{S}$ implies $\prob(f(X) \in \mathcal{S}) \geq \prob(f(X) \in f(\mathcal{E}_{p}))$. Since $f$ is not necessarily a one-to-one mapping, we have $\prob(f(X) \in f(\mathcal{E}_{p})) \geq \prob(X \in \mathcal{E}_{p})\geq p$. Combining the last two inequalities yields the desired result. 
\end{proof}
\medskip

According to Lemma \ref{lem: lower bound on safety}, if we can certify that the output reachable set $f(\mathcal{E}_{p})$ lies entirely in the safe set $\mathcal{S}$ for some $p \in (0,1)$, then the network is safe with probability at least $p$. In particular, finding the best lower bound corresponds to the non-convex optimization problem,
\begin{align} \label{eq: maximizing probability}
\mathrm{maximize} \quad p \quad \text{subject to } f(\mathcal{E}_{p}) \subseteq \mathcal{S},
\end{align}
with decision variable $p \in [0,1)$. By Lemma \ref{lem: lower bound on safety}, the optimal solution $p^\star$ then satisfies
\begin{align}
\prob(f(X) \in \mathcal{S})  \geq p^\star.
\end{align} 
\subsection{Confidence Propagation}
A closely related problem to probabilistic safety verification is confidence propagation. Explicitly, given a $p$-level confidence region $\mathcal{E}_p$ of the input of a neural network, our goal is to find a $p$-level confidence region for the output. To see the connection to the probabilistic verification problem, let $\mathcal{S}$ be any outer approximation of the output reachable set, i.e., $f(\mathcal{E}_p) \subseteq \mathcal{S}$. By lemma \ref{lem: lower bound on safety}, $\mathcal{S}$ is a $p$-level confidence region for the output. Of course, there is an infinite number of such possible confidence regions. Our goal is find the ``best" confidence region with respect to some metric. Using the volume of the ellipsoid as an optimization criterion, the best confidence region amounts to solving the problem
\begin{align} \label{eq: min vol E}
\mathrm{minimize} \ \mathrm{Volume}(\mathcal{S}) \quad \text{subject to } f(\mathcal{E}_p) \subseteq \mathcal{S}. 
\end{align}
The solution to the above problem provides the $p$-level confidence region with the minimum volume. Figure \ref{fig:confidence estimation} illustrates the procedure of confidence estimation. In the next section, we provide a convex relaxation of the optimization problem \eqref{eq: min vol E}. The other problem in \eqref{eq: maximizing probability} is a straightforward extension of confidence estimation, and hence, we will not discuss the details.

%

%
%
%
%
%

\begin{figure}
	\centering
	\includegraphics[width=\linewidth]{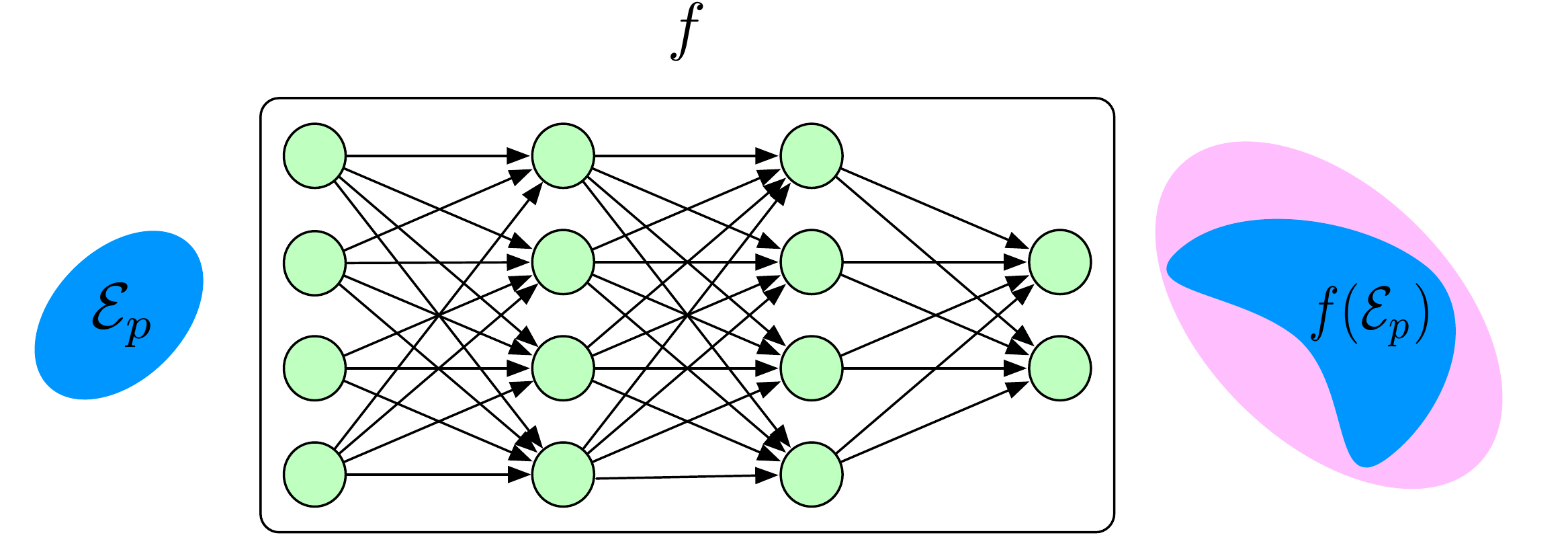}
	\caption{\small $p$-level input confidence ellipsoid $\mathcal{E}_p$, its image $f(\mathcal{E}_p)$, and the estimated output confidence ellipsoid.}
	\label{fig:confidence estimation}
\end{figure}

\section{Problem Relaxation via Quadratic Constraints} \label{sec: Problem Relaxation via Quadratic Constraints}
Due to the presence of nonlinear activation functions, checking the condition $f(\mathcal{E}_{p}) \subseteq \mathcal{S}$ in \eqref{eq: maximizing probability} or \eqref{eq: min vol E} is a non-convex feasibility problem and is NP-hard, in general. Our main idea is to abstract the original network $f$ by another network $\tilde{f}$ in the sense that $\tilde{f}$ over-approximates the output of the original network for any input ellipsoid, i.e., $f(\mathcal{E}_{p}) \subseteq \tilde f(\mathcal{E}_{p})$ for any $p \in [0,1)$. Then it will be sufficient to verify the safety properties of the relaxed network, i.e., verify the inclusion $\tilde f(\mathcal{E}_{p}) \subseteq \mathcal{S}$. In the following, we use the framework of quadratic constraints to develop such an abstraction.



\subsection{Relaxation of Nonlinearities by Quadratic Constraints}
In this subsection, we show how we can abstract activation functions, and in particular the ReLU function, using quadratic constraints. We first provide a formal definition, introduced in \cite{fazlyab2019safety}.
\begin{definition} \label{def: QC}
	Let $\phi \colon \mathbb{R}^{d} \to \mathbb{R}^{d}$ be and suppose $\mathcal{Q} \subset \mathbb{S}^{2d+1}$ is the set of all symmetric and indefinite matrices $Q$ such that the inequality
	\begin{align} \label{eq: QC def}
	\begin{bmatrix}
	x \\ \phi(x) \\ 1
	\end{bmatrix}^\top  Q  \begin{bmatrix}
	x \\ \phi(x) \\ 1
	\end{bmatrix} \geq 0,
	\end{align}
	holds for all $x \in \mathbb{R}^d$. Then we say $\phi$ satisfies the quadratic constraint defined by $\mathcal{Q}$.
\end{definition}
Note that the matrix $Q$ in Definition \ref{def: QC} is indefinite, or otherwise, the constraint trivially holds. Before deriving QCs for the ReLU function, we recall some definitions, which can be found in many references; for example \cite{megretski1997system,d2001new}.
\begin{definition}[Sector-bounded nonlinearity]
	A nonlinear function $\varphi \colon \mathbb{R} \to \mathbb{R}$ is \emph{sector-bounded} on the sector $[\alpha,\beta]$ ($0 \leq \alpha \leq \beta$) if the following condition holds for all $x$,
	\begin{align} \label{eq: sector bound}
	(\varphi(x)-\alpha x)(\varphi(x)-\beta x) \leq 0.
	\end{align}
\end{definition}
\begin{definition}[Slope-restricted nonlinearity]
	A nonlinear function $\varphi(x) \colon \mathbb{R} \to \mathbb{R}$ is slope-restricted on $[\alpha,\beta]$ ($0 \leq \alpha \leq \beta$) if 	for any  $(x,\varphi(x))$ and $(x^\star,\varphi(x^\star))$,
	\begin{align} \label{eq: quadratic constraint 0}
	(\varphi(x)\!-\!\varphi(x^{\star})-\alpha (x\!-\!x^{\star}))(\varphi(x)\!-\!\varphi(x^{\star})-\beta (x\!-\!x^{\star})) \leq 0.
	\end{align}

\end{definition}
%

%

\textit{Repeated nonlinearities.} Assuming that the same activation function is used in all neurons, we can exploit this structure to refine the QC abstraction of the nonlinearity. Explicitly, suppose $\varphi \colon \mathbb{R} \to \mathbb{R}$ is slope-restricted on $[\alpha,\beta]$ and let $\phi(x)=[\varphi(x_1)$ $\cdots \varphi(x_d)]^\top$ be a vector-valued function constructed by component-wise repetition of $\varphi$. 
%
%
%
%
%
It is not hard to verify that $\phi$ is also slope-restricted in the same sector. However, this representation simply ignores the fact that all the nonlinearities that compose $\phi$ are the same. By taking advantage of this structure, we can refine the quadratic constraint that describes $\phi$. To be specific, for an input-output pair $(x,\phi(x)), \ x \in \mathbb{R}^d$, we can write the slope-restriction condition
\begin{align} \label{eq: repeated 2}
\Scale[0.99]{(\varphi(x_i) \!- \! \varphi(x_j) \!-\!\alpha (x_i\!-\!x_j))(\varphi(x_i) \!- \! \varphi(x_j) \!- \! \beta (x_i \!- \! x_j)) \! \leq \! 0,}
\end{align}
for all distinct $i,j$. This particular QC can tighten the relaxation incurred by the QC abstraction of the nonlinearity.

%
There are several results in the literature about repeated nonlinearities. For instance, in \cite{d2001new,kulkarni2002all}, the authors derive QCs for repeated and odd nonlinearities (e.g. tanh function).
\subsection{QC for ReLU function}
In this subsection, we derive quadratic constraints for the ReLU function, $\phi(x) = \max(0,x), \ x \in \mathbb{R}^d$. Note that this function lies on the boundary of the sector $[0,1]$. More precisely, we can describe the ReLU function by three quadratic and/or affine constraints:
\begin{align} \label{eq: relu qcs}
y_i = \max(0,x_i) \Leftrightarrow y_i \geq x_i, \ \ y_i \geq 0, \ \ y_i^2 = x_iy_i.
\end{align}
On the other hand, for any two distinct indices $i \neq j$, we can write the constraint \eqref{eq: repeated 2} with $\alpha=0$, and $\beta=1$,
\begin{align} \label{eq: relu qcs 1}
(y_j-y_i)^2 \leq (y_j-y_i)(x_j-x_i).
\end{align}
By adding a weighted combination of all these constraints (positive weights for inequalities), we find that the ReLU function $y = \max(0,x)$ satisfies
\begin{align} \label{eq: weighted combination}
&\sum_{i=1}^{d} \lambda_i (y_i^2-x_i y_i) + \nu_i (y_i-x_i) + \eta_i y_i - \\
&\qquad \quad \sum_{i \neq j} \lambda_{ij} \left((y_j-y_i)^2 - (y_j-y_i)(x_j-x_i) \right) \geq 0, \nonumber 
\end{align}
for any multipliers $(\lambda_i,\nu_i,\eta_i,\lambda_{ij}) \in \mathbb{R} \times \mathbb{R}_{+}^3$ for $i,j \in \{1,\cdots,d\}$. This inequality can be written in the compact form \eqref{eq: QC def}, as stated in the following lemma.
\begin{lemma}[QC for ReLU function]  \label{lem: QC for relu}
	The ReLU function, $\phi(x) = \max(0,x) \colon \mathbb{R}^d \to \mathbb{R}^d$, satisfies the QC defined by $\mathcal{Q}$ where
	\begin{align}  \label{lem: QC for relu 1}
	\mathcal{Q}=\left\{ Q\mid Q= \begin{bmatrix}
	0 & T & -\nu \\ T & -2T & \nu \!+ \! \eta\\ -\nu^\top & \nu^\top \!+ \! \eta^\top & 0
	\end{bmatrix}\right\}.
	\end{align}
	Here $\eta,\nu \geq 0$ and $T \in \mathbb{S}_{+}^d$ is given by
	$$T = \sum_{i=1}^d \lambda_i e_i e_i^\top + \sum_{i=1}^{d-1} \sum_{j>i}^{d} \lambda_{ij}(e_i-e_j)(e_i-e_j)^\top,$$
	where $e_i$ is the $i$-th basis vector in $\mathbb{R}^d$ and $\lambda_{ij} \geq 0$.
\end{lemma}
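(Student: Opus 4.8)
The plan is to prove the lemma by direct verification: I would expand the quadratic form in \eqref{eq: QC def} for the displayed block matrix $Q$ and show that it coincides, up to the scaling of the multipliers, with the scalar combination \eqref{eq: weighted combination}, whose nonnegativity on ReLU pairs has already been established. Writing $y=\phi(x)=\max(0,x)$ and $z=[x^\top\ y^\top\ 1]^\top$, I would compute $z^\top Q z$ block by block. Using $Q_{12}=Q_{21}^\top=T$, $Q_{22}=-2T$, $Q_{13}=Q_{31}^\top=-\nu$ and $Q_{23}=Q_{32}^\top=\nu+\eta$, each symmetric off-diagonal block contributes twice, and collecting terms gives
\begin{align*}
z^\top Q z = 2(x-y)^\top T y + 2\,\nu^\top(y-x) + 2\,\eta^\top y.
\end{align*}
This block expansion is the only genuinely mechanical step and is the one demanding care: I must track the factor of two generated by each symmetric off-diagonal block, the minus signs in the $-2T$ and $-\nu$ entries, and (below) the difference between the $\sum_{i<j}$ convention used to define $T$ and the $\sum_{i\neq j}$ convention used in \eqref{eq: weighted combination}.

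Next I would substitute the definition of $T$ and use $e_i^\top x = x_i$, $e_i^\top y = y_i$ to expand the quadratic piece as
\begin{align*}
(x-y)^\top T y = \sum_{i=1}^d \lambda_i\,(x_i y_i - y_i^2) + \sum_{i<j} \lambda_{ij}\big[(x_i-x_j)(y_i-y_j) - (y_i-y_j)^2\big].
\end{align*}
Together with $\nu^\top(y-x)=\sum_i \nu_i(y_i-x_i)$ and $\eta^\top y=\sum_i \eta_i y_i$, this reproduces the left-hand side of \eqref{eq: weighted combination} term-for-term, the only differences being an overall rescaling of the weights and the symmetric-pair bookkeeping ($\sum_{i\neq j} = 2\sum_{i<j}$ since the pairwise summand is invariant under $i\leftrightarrow j$). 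These conventions are immaterial because the weights multiplying the \emph{inequality} constraints are nonnegative, so any positive rescaling preserves the sign of the form.

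It then remains to invoke the constituent ReLU constraints of \eqref{eq: relu qcs} and \eqref{eq: relu qcs 1} to conclude nonnegativity of each piece: $x_i y_i - y_i^2 = 0$ identically, so the $\lambda_i$ contributions vanish regardless of the sign of $\lambda_i$; $y_i\geq x_i$ with $\nu_i\geq 0$ makes $\nu_i(y_i-x_i)\geq 0$; $y_i\geq 0$ with $\eta_i\geq 0$ makes $\eta_i y_i\geq 0$; and the pairwise slope restriction \eqref{eq: relu qcs 1} with $\lambda_{ij}\geq 0$ makes each pairwise term nonnegative. Hence $z^\top Q z\geq 0$, which is exactly the QC of Definition \ref{def: QC}. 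Finally, the structural claim $T\in\mathbb{S}_+^d$ is immediate, since $T$ is a nonnegative combination of the rank-one positive semidefinite matrices $e_i e_i^\top$ and $(e_i-e_j)(e_i-e_j)^\top$ (using $\lambda_i,\lambda_{ij}\geq 0$); I would remark that because the $\lambda_i$-terms encode the \emph{equality} $y_i^2=x_i y_i$, their sign is in fact unconstrained for the QC itself, and restricting to $\lambda_i\geq 0$ is precisely what renders $T$ positive semidefinite.
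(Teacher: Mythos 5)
Your proposal is correct and is essentially the argument the paper itself sets up in \eqref{eq: relu qcs}--\eqref{eq: weighted combination} immediately before the lemma: the block expansion $z^\top Q z = 2(x-y)^\top T y + 2\nu^\top(y-x) + 2\eta^\top y$ is right, it reduces term-for-term (up to the harmless factor of $2$ and the $\sum_{i<j}$ versus $\sum_{i\neq j}$ bookkeeping) to the weighted combination \eqref{eq: weighted combination}, and each constituent piece is nonnegative by \eqref{eq: relu qcs} and \eqref{eq: relu qcs 1}. The paper defers the formal proof to a citation, so your write-up supplies the self-contained verification that the text only sketches; your closing remark that the sign of $\lambda_i$ is unconstrained for the QC (the equality $y_i^2 = x_i y_i$ makes those terms vanish) and is restricted only to guarantee $T \in \mathbb{S}_{+}^d$ is consistent with the multiplier conventions stated after \eqref{eq: weighted combination}.
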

\begin{proof} See \cite{fazlyab2019safety}.
\end{proof}
\vspace{1em}
Lemma \ref{lem: QC for relu} characterizes a family of valid QCs for the ReLU function. It is not hard to verify that the set $\mathcal{Q}$ of valid QCs is a convex cone. As we will see in the next section, the matrix $Q$ in \eqref{lem: QC for relu 1} appears as a decision variable in the optimization problem.

\subsection{Tightening the Relaxation}
In the previous subsection, we derived QCs that are valid for the whole space $\mathbb{R}^d$. When restricted to a region $\mathcal{R} \subseteq \mathbb{R}^d$, we can tighten the QC relaxation. Consider the relationship $\phi(x) = \max(0,x), \ x \in \mathcal{R} \subseteq \mathbb{R}^d$ and
%
let $\mathcal{I}^{+}$, and $\mathcal{I}^{-}$ be the set of neurons that are always active or always inactive, i.e.,
\begin{align}
\mathcal{I}^{+} &= \{i  \mid x_i \geq 0 \text{ for all } x \in \mathcal{R}\} \\ \notag
\mathcal{I}^{-} &= \{i  \mid x_i < 0 \text{ for all } x \in \mathcal{R}\}. 
\end{align}
The constraint $y_i \geq x_i$ holds with equality for active neurons. 
Therefore, we can write
\begin{align*}
\nu_i \in \mathbb{R} \text{ if } i \in \mathcal{I}^{+}, \ \nu_i \geq 0 \text{ otherwise}.
\end{align*}
Similarly, the constraint $y_i \geq 0$ holds with equality for inactive neurons. Therefore, we can write
\begin{align*}
\eta_i \in \mathbb{R} \text{ if } i \in \mathcal{I}^{-}, \ \eta_i \geq 0 \text{ otherwise}.
\end{align*}
Finally, it can be verified that the cross-coupling constraint in \eqref{eq: relu qcs 1} holds with equality for pairs of always active or always inactive neurons. Therefore, for any $1 \leq i < j \leq d$, we can write
\begin{alignat*}{2}
\lambda_{ij} &\in \mathbb{R} &&\text{ if } (i,j) \in \mathcal{I}^{+}\times \mathcal{I}^{+} \text{ or } (i,j) \in \mathcal{I}^{-}\times \mathcal{I}^{-} \\
\lambda_{ij} &\geq 0 &&\text{ otherwise}.
\end{alignat*}
These additional degrees of freedom on the multipliers can tighten the relaxation incurred in \eqref{eq: weighted combination}. Note that the set of active or inactive neurons are not known \textit{a priori}. However, we can partially find them using, for example, interval arithmetic. 
%
%
%
%
\section{Analysis of the Relaxed Network via Semidefinite Programming} \label{sec: Analysis of the Relaxed Network via Semidefinite Programming}
In this section, we use the QC abstraction developed in the previous section to analyze the safety of the relaxed network. In the next theorem, we state our main result for one-layer neural networks and will discuss the multi-layer case in Section \ref{sec: Multi-layer Case}.

\ifx
In the next subsection, we address probabilistic verification when the safe set is described by the . We first consider the case where the polytope is the half-space $\mathcal{S}=\{y \mid a^\top y -b \leq 0 \}$. The more general case of polytopes \eqref{eq: safe set half space} is a straightforward extension, which we wil discuss after stating the main result.
\begin{theorem}
	Consider a one-layer neural network $f \colon \mathbb{R}^{n_x} \to \mathbb{R}^{n_y}$ described by the equations
	\begin{align}
	f(x)= W^1 \phi(W^0 x + b^0) + b^1,
	\end{align}
	where $\phi \colon \mathbb{R}^{n_1} \to \mathbb{R}^{n_1}$ is a nonlinear activation function that satisfies the quadratic constraint defined by $\mathcal{Q}$, i.e., for any $Q \in \mathcal{Q}$,
	\begin{align} \label{thm: main result one layer 1}
	\begin{bmatrix}
	z \\ \phi(z) \\ 1
	\end{bmatrix}^\top Q     \begin{bmatrix}
	z \\ \phi(z) \\ 1
	\end{bmatrix} \geq 0 \quad \text{for all } z.
	\end{align}
	Suppose the input to the neural network is a random variable with mean and covariance $\mu_x \in \mathbb{R}^n_x, \Sigma_x \in \mathbb{S}_{++}^{n_x}$. Consider the following matrix inequality:
	\begin{align}  \label{thm: main result polytope}
	M_1(\tau) + M_2(Q) + M_3(S) \preceq 0.
	\end{align}
	where 
	\begin{subequations}
		\begin{align*}  
		M_1(\tau) &\!=\!   \begin{bmatrix}
		I_{n_x} & 0 \\ 0 & 0 \\ 0 & 1
		\end{bmatrix} P(\tau) \begin{bmatrix}
		I_{n_x} & 0 \\ 0 & 0 \\ 0 & 1
		\end{bmatrix}^\top \\
		M_2(Q) &= \begin{bmatrix}
		{W^0}^\top  & 0 & 0 \\ 0 & I_{n_1} & 0 \\ {b^0}^\top & 0 & 1
		\end{bmatrix} Q  \begin{bmatrix}
		{W^0}  & 0 & b^0 \\ 0 & I_{n_1} & 0 \\ 0 & 0 & 1
		\end{bmatrix} \\
		M_3(S) &=\begin{bmatrix}
		0 & 0 \\ {W^1}^\top & 0 \\ {b^1}^\top & 1
		\end{bmatrix} S \begin{bmatrix}
		0 & W^1 & b^1 \\ 0 & 0 & 1
		\end{bmatrix}
		\end{align*}
	\end{subequations}
		and 
		\begin{align*}
		P(\tau) &= \tau \begin{bmatrix}
		-\Sigma_x^{-1}  & \Sigma_x^{-1} \mu_x \\ \mu_x^\top \Sigma_x^{-1} & -\mu_x^\top \Sigma_x^{-1} \mu_x+\rho
		\end{bmatrix} \ S = \begin{bmatrix}
		0 & a \\ a^\top &  -2b
		\end{bmatrix} 
		\end{align*}
	If \eqref{thm: main result polytope} is feasible for some $(\tau,Q) \in \mathbb{R}_{+} \times \mathcal{Q}$, then
	$$
	\prob(a^\top f(X)\leq b) \geq 1 - \frac{n_y}{\rho}.
	$$
\end{theorem}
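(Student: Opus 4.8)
The plan is to read the matrix inequality \eqref{thm: main result polytope} as an S-procedure certificate that the image of the input confidence ellipsoid lies inside the safe half-space $\mathcal{S}=\{y \mid a^\top y \le b\}$, and then to transfer this deterministic inclusion to a probabilistic guarantee via Lemmas \ref{lem: confidence ellipsoid non-Gaussian} and \ref{lem: lower bound on safety}. Concretely, I would fix any feasible triple $(\tau,Q,S)$ and test the inequality against the single lifted vector built from a genuine input--output trajectory of the network, so that the abstract semidefinite condition collapses into a pointwise scalar inequality along the graph of $f$.

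Set $\tilde z = W^0 x + b^0$, so that $\phi(\tilde z)$ is the hidden activation and $f(x)=W^1\phi(\tilde z)+b^1$, and introduce the lifted vector
\begin{align*}
\xi = \begin{bmatrix} x \\ \phi(\tilde z) \\ 1 \end{bmatrix}.
\end{align*}
The heart of the proof is to evaluate each congruence term on $\xi$. First I would check that the selection matrices in $M_2(Q)$ map $\xi$ to $[\tilde z^\top\ \phi(\tilde z)^\top\ 1]^\top$, so that $\xi^\top M_2(Q)\xi$ equals exactly the quadratic form in \eqref{thm: main result one layer 1} and is therefore nonnegative for every $x$ by the QC assumption. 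Next, the matrix in $M_3(S)$ sends $\xi$ to $[f(x)^\top\ 1]^\top$, and a direct expansion with $S=\begin{bmatrix}0 & a\\ a^\top & -2b\end{bmatrix}$ gives $\xi^\top M_3(S)\xi = 2(a^\top f(x)-b)$. Finally, $M_1(\tau)$ only reads off the $x$ and constant coordinates through $P(\tau)$, so $\xi^\top M_1(\tau)\xi = \tau\left(\rho - (x-\mu_x)^\top\Sigma_x^{-1}(x-\mu_x)\right)$.

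With these three identities in hand, evaluating the quadratic form associated with \eqref{thm: main result polytope} at $\xi$ yields, for every $x$,
\begin{align*}
\tau\left(\rho - (x-\mu_x)^\top\Sigma_x^{-1}(x-\mu_x)\right) + \xi^\top M_2(Q)\xi + 2(a^\top f(x)-b) \le 0.
\end{align*}
Since $\tau \ge 0$ and $\xi^\top M_2(Q)\xi \ge 0$, dropping the nonnegative middle term and restricting to the input ellipsoid $\mathcal{E}_\rho=\{x \mid (x-\mu_x)^\top\Sigma_x^{-1}(x-\mu_x)\le \rho\}$ forces $a^\top f(x)\le b$ on $\mathcal{E}_\rho$; hence $f(\mathcal{E}_\rho)\subseteq \mathcal{S}$. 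Invoking the Chebyshev confidence-region characterization (Lemma \ref{lem: confidence ellipsoid non-Gaussian}) to identify $\mathcal{E}_\rho$ as a $p$-level confidence region for the sublevel threshold $\rho$, and then Lemma \ref{lem: lower bound on safety} to propagate it through $f$, I obtain $\prob(a^\top f(X)\le b) \ge 1 - n_y/\rho$.

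The main obstacle is bookkeeping rather than conceptual depth: one must verify the three quadratic-form identities carefully, in particular that the block structure of the congruence matrices correctly lifts $\xi$ through the hidden layer and that the constant coordinate produces the affine terms $b^0$, $b^1$, $b$ with the right signs. The second delicate point is the sign discipline in the final step, confirming that the negative-semidefiniteness in \eqref{thm: main result polytope} combines with the nonnegativity of $M_2(Q)$ and the membership $x\in\mathcal{E}_\rho$ so that the inequality direction is preserved when the $M_2(Q)$ term is discarded. Once these are settled, the probabilistic conclusion follows immediately from the two confidence lemmas, with $\rho$ acting as the tunable radius that trades off the tightness of the semidefinite certificate against the attainable confidence level.
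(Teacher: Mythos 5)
Your proof is correct in its core mechanics and follows essentially the same route as the paper's proof of the analogous covering-ellipsoid result (Theorem \ref{thm: main result one layer}): lift a genuine input--output pair of the network into $\xi=[x^\top\ \phi(W^0x+b^0)^\top\ 1]^\top$, use the congruence structure of $M_1$, $M_2$, $M_3$ to collapse the semidefinite certificate into the pointwise inequality
\begin{align*}
\tau\bigl(\rho-(x-\mu_x)^\top\Sigma_x^{-1}(x-\mu_x)\bigr)+\xi^\top M_2(Q)\,\xi+2\bigl(a^\top f(x)-b\bigr)\le 0,
\end{align*}
drop the two terms that are nonnegative on the input ellipsoid, and conclude the deterministic inclusion $f(\mathcal{E}_\rho)\subseteq\{y\mid a^\top y\le b\}$, after which the probabilistic claim follows from Lemma \ref{lem: confidence ellipsoid non-Gaussian} and Lemma \ref{lem: lower bound on safety}. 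All three quadratic-form identities you state check out, including the signs of the affine terms.

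The one genuine flaw is the dimension constant in your last line. Chebyshev's inequality (Lemma \ref{lem: confidence ellipsoid non-Gaussian}) is applied to the \emph{input} random variable $X\in\mathbb{R}^{n_x}$: the ellipsoid $\mathcal{E}_\rho=\{x\mid (x-\mu_x)^\top\Sigma_x^{-1}(x-\mu_x)\le\rho\}$ is a $p$-level confidence region with $\rho=n_x/(1-p)$, i.e.\ $p=1-n_x/\rho$. Your argument therefore establishes $\prob\bigl(a^\top f(X)\le b\bigr)\ge 1-n_x/\rho$, not $1-n_y/\rho$ as you wrote; nothing in the proof ever invokes the output dimension, and no Chebyshev bound on the output is available since the first two moments of $f(X)$ are not known. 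When $n_x>n_y$ the claimed bound is strictly stronger than what this argument (or any argument via an input confidence region) delivers. The discrepancy originates in the theorem statement itself, where $n_y$ appears to be a typo for $n_x$; but you should have flagged it rather than silently writing down the claimed constant in place of the one your own chain of inequalities produces.
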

\fi

\begin{theorem}[Output covering ellipsoid] \label{thm: main result one layer}
	Consider a one-layer neural network $f \colon \mathbb{R}^{n_x} \to \mathbb{R}^{n_y}$ described by the equation
	\begin{align}
	y= W^1 \phi(W^0 x + b^0) + b^1,
	\end{align}
	where $\phi \colon \mathbb{R}^{n_1} \to \mathbb{R}^{n_1}$ satisfies the quadratic constraint defined by $\mathcal{Q}$, i.e., for any $Q \in \mathcal{Q}$,
	\begin{align} \label{thm: main result one layer 1}
	\begin{bmatrix}
	z \\ \phi(z) \\ 1
	\end{bmatrix}^\top Q     \begin{bmatrix}
	z \\ \phi(z) \\ 1
	\end{bmatrix} \geq 0 \quad \text{for all } z.
	\end{align}
	Suppose $x \in \mathcal{E}(\mu_x,\Sigma_x)$. Consider the following matrix inequality
	\begin{align}  \label{thm: main result one layer 2}
	M_1 + M_2+ M_3 \preceq 0,
	\end{align}
	where 
	\begin{subequations}
		\begin{align*}  
		M_1 &\!=\!   \begin{bmatrix}
		I_{n_x} & 0 \\ 0 & 0 \\ 0 & 1
		\end{bmatrix} P(\tau) \begin{bmatrix}
		I_{n_x} & 0 \\ 0 & 0 \\ 0 & 1
		\end{bmatrix}^\top \\
		M_2 &= \begin{bmatrix}
		{W^0}^\top  & 0 & 0 \\ 0 & I_{n_1} & 0 \\ {b^0}^\top & 0 & 1
		\end{bmatrix} Q  \begin{bmatrix}
		{W^0}  & 0 & b^0 \\ 0 & I_{n_1} & 0 \\ 0 & 0 & 1
		\end{bmatrix} \\
		M_3 &=\begin{bmatrix}
		0 & 0 \\ {W^1}^\top & 0 \\ {b^1}^\top & 1
		\end{bmatrix} S(A,b) \begin{bmatrix}
		0 & W^1 & b^1 \\ 0 & 0 & 1
		\end{bmatrix}
		\end{align*}
	\end{subequations}
	with
	\begin{align*}
	P(\tau) &= \tau \begin{bmatrix}
	-\Sigma_x^{-1}  & \Sigma_x^{-1} \mu_x \\ \mu_x^\top \Sigma_x^{-1} & -\mu_x^\top \Sigma_x^{-1} \mu_x+1
	\end{bmatrix} \\\
	S(A,b) &= \begin{bmatrix}
	A^2 & A b \\ b^\top A &  b^\top b - 1
	\end{bmatrix}.
	\end{align*}
	
	If \eqref{thm: main result one layer 2} is feasible for some $(\tau,A,Q,b) \in \mathbb{R}_{+} \times \mathbb{S}^{n_y} \times \mathcal{Q} \times \mathbb{R}^{n_y}$, then $y \in \mathcal{E}(\mu_y,\Sigma_y)$ with
	$
	\mu_y = -A^{-1}b  \emph{ and } \Sigma_y = A^{-2}.
	$ 
\end{theorem}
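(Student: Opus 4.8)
The plan is to interpret the matrix inequality \eqref{thm: main result one layer 2} through a lifting (S-procedure) argument, evaluating all three summands at a single lifted vector. For an arbitrary point $x \in \mathcal{E}(\mu_x,\Sigma_x)$, write $z = W^0 x + b^0$, $y = W^1\phi(z)+b^1$, and form $v = [\,x^\top \ \ \phi(z)^\top \ \ 1\,]^\top$. The first thing I would verify is that the outer factors of $M_1$, $M_2$, $M_3$ act as selectors on $v$: the right factor of $M_1$ extracts $[\,x^\top \ 1\,]^\top$, the right factor of $M_2$ maps $v$ to $[\,z^\top \ \phi(z)^\top \ 1\,]^\top$ (this is where the affine map $x \mapsto W^0x+b^0$ gets folded into the lifted coordinates), and the right factor of $M_3$ maps $v$ to $[\,y^\top \ 1\,]^\top$. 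This block bookkeeping is the only delicate part of the argument, and it is where I expect to spend the most care in getting the partitions and sign conventions right.

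With those identities in place, each quadratic form is immediate. I would compute $v^\top M_1 v = [\,x^\top \ 1\,]\,P(\tau)\,[\,x^\top \ 1\,]^\top = \tau\bigl(1 - (x-\mu_x)^\top \Sigma_x^{-1}(x-\mu_x)\bigr)$, which is nonnegative because $\tau \geq 0$ and $x \in \mathcal{E}(\mu_x,\Sigma_x)$ means $(x-\mu_x)^\top\Sigma_x^{-1}(x-\mu_x)\leq 1$. Next, $v^\top M_2 v$ equals the left-hand side of the quadratic constraint \eqref{thm: main result one layer 1} evaluated at $z$, hence nonnegative for every $Q \in \mathcal{Q}$. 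Finally, $v^\top M_3 v = [\,y^\top \ 1\,]\,S(A,b)\,[\,y^\top \ 1\,]^\top$.

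The conclusion then follows by contracting \eqref{thm: main result one layer 2} with $v$: from $M_1+M_2+M_3 \preceq 0$ we obtain $v^\top M_1 v + v^\top M_2 v + v^\top M_3 v \leq 0$, and since the first two terms are nonnegative, $v^\top M_3 v \leq 0$. It then remains to recognize that $[\,y^\top \ 1\,]\,S(A,b)\,[\,y^\top \ 1\,]^\top \leq 0$ is exactly the membership $y \in \mathcal{E}(\mu_y,\Sigma_y)$: expanding $(y-\mu_y)^\top \Sigma_y^{-1}(y-\mu_y) - 1$ with $\Sigma_y^{-1}=A^2$ and $\mu_y = -A^{-1}b$ yields $y^\top A^2 y + 2 b^\top A y + b^\top b - 1$, which matches the quadratic form of $S(A,b)$ block by block (using $A=A^\top$). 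Since $x$ was arbitrary in the input ellipsoid, this proves the inclusion. The one side condition I would flag explicitly is that $A$ must be invertible so that $\Sigma_y = A^{-2}\succ 0$ is a legitimate shape matrix and $\mu_y=-A^{-1}b$ is well defined; this is implicit in reading the feasible $(A,b)$ off as an ellipsoid, but everything up to the final interpretation goes through for any symmetric $A$.
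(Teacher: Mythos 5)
Your proposal is correct and follows essentially the same route as the paper's proof: lift to the vector $[\,x^\top\ \phi(W^0x+b^0)^\top\ 1\,]^\top$, verify that the outer factors of $M_1,M_2,M_3$ select the input-ellipsoid coordinates, the QC coordinates, and the output coordinates respectively, then contract the LMI with this vector and read off $[\,y^\top\ 1\,]S(A,b)[\,y^\top\ 1\,]^\top\le 0$ as membership in $\mathcal{E}(-A^{-1}b,A^{-2})$. Your explicit remark that $A$ must be invertible for the conclusion to define a legitimate ellipsoid is a point the paper leaves implicit.
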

\begin{proof}
	We first introduce the auxiliary variable $z$, and rewrite the equation of the neural network as
	\begin{align*}
	z  &= \phi(W^0 x + b^0) \quad y = W^1 z + b^1.
	\end{align*}
	Since $\phi$ satisfies the QC defined by $\mathcal{Q}$, we can write the following QC from the identity $z = \phi(W^0 x + b^0)$:
	\begin{align} \label{thm: hyperplance one layer 7}
	\begin{bmatrix}
	W^0 x + b^0 \\ z \\ 1
	\end{bmatrix}^\top Q \begin{bmatrix}
	W^0 x +b^0 \\ z \\ 1
	\end{bmatrix} \geq 0, \ \text{for all } Q \in \mathcal{Q}.
	\end{align}
	By substituting the identity
	\begin{align*} 
	\begin{bmatrix}
	W^0 x + b^0 \\ z \\ 1
	\end{bmatrix} = \begin{bmatrix}
	W^0 & 0 & b^0 \\ 0 & I_{n_1} & 0 \\ 0 & 0 & 1
	\end{bmatrix} \begin{bmatrix}
	x \\ z \\ 1
	\end{bmatrix},
	\end{align*}
	back into \eqref{thm: hyperplance one layer 7} and denoting $\bbx = [{x}^\top \ {z}^\top]^\top$, we can write the inequality
	\begin{align} \label{thm: hyperplance one layer 8.5}
	\begin{bmatrix}
	\bbx \\ 1
	\end{bmatrix}^\top 
	M_2\begin{bmatrix}
	\bbx \\ 1
	\end{bmatrix} \geq 0,
	%
	\end{align}
	for any $Q \in \mathcal{Q}$ and all $\bbx$. By definition, for all $x \in \mathcal{E}(\mu_x,\Sigma_x)$, we have
	$(x-\mu_x)^\top \Sigma_x^{-1} (x-\mu_x) \leq 1$, 
	which is equivalent to
	\begin{align*} 
	\tau \begin{bmatrix}
	x  \\ 1
	\end{bmatrix}^\top  \begin{bmatrix}
	-\Sigma_x^{-1}  & \Sigma_x^{-1} \mu_x \\ \mu_x^\top \Sigma_x^{-1} & -\mu_x^\top \Sigma_x^{-1} \mu_x +1 
	\end{bmatrix} \begin{bmatrix}
	x  \\ 1
	\end{bmatrix} \geq 0.
	\end{align*}
   By using the identity
   \begin{align*}
   \begin{bmatrix}
   x  \\ 1
   \end{bmatrix} = \begin{bmatrix}
   I_{n_x} & 0 & 0 \\ 0 & 0 & 1
   \end{bmatrix} \begin{bmatrix}
   x \\ z \\ 1
   \end{bmatrix},
   \end{align*}
   we conclude that for all $x \in \mathcal{E}(\mu_x,\Sigma_x), \ z = \phi(W^0x+b)$,
	\begin{align} \label{thm: hyperplance one layer 10}
	\begin{bmatrix}
	\bbx \\ 1
	\end{bmatrix}^\top 
	M_1\begin{bmatrix}
	\bbx \\ 1
	\end{bmatrix}  \geq 0. 
	\end{align}
	Suppose \eqref{thm: main result one layer 2} holds for some $(A,Q,b) \in \mathbb{S}^{n_y} \times \mathcal{Q} \times \mathbb{R}^{n_y}$. By left- and right- multiplying both sides of \eqref{thm: main result one layer 1} by $[\bbx^\top \ 1]$ and $[\bbx^\top \ 1]^\top$, respectively, we obtain
	\begin{align*}
	\begin{bmatrix}
		\bbx \\ 1
		\end{bmatrix}^\top M_{1}    \begin{bmatrix}
		\bbx \\ 1
		\end{bmatrix} + \begin{bmatrix}
		\bbx \\ 1 \end{bmatrix}^\top M_{2} \begin{bmatrix}
		\bbx \\ 1 \end{bmatrix} + \begin{bmatrix}
	\bbx \\ 1 \end{bmatrix}^\top  M_{3} \begin{bmatrix}
	\bbx \\ 1 \end{bmatrix} \leq 0.
	\end{align*}
	For any $x \in \mathcal{E}(\mu_x,\Sigma_x)$ the first two quadratic terms are nonnegative by \eqref{thm: hyperplance one layer 10} and \eqref{thm: hyperplance one layer 8.5}, respectively. Therefore, the last term on the left-hand side must be nonpositive for all $x \in \mathcal{E}(\mu_x,\Sigma_x)$,
	\begin{align*}
	\begin{bmatrix}
	\bbx \\ 1
	\end{bmatrix}^\top 
	M_3\begin{bmatrix}
	\bbx \\ 1
	\end{bmatrix}  \leq 0.
	\end{align*}
	But the preceding inequality, using the relation $y = W^1 z + b^1$, is equivalent to
	$$
	\begin{bmatrix}
	y \\ 1
	\end{bmatrix}^\top \begin{bmatrix}
	A^2 & A b \\ b^\top A &  b^\top b - 1
	\end{bmatrix} \begin{bmatrix}
	y \\ 1
	\end{bmatrix} \leq 0,
	$$
	which is equivalent to
	$(y+A^{-1}b)^\top A^2 (y+A^{-1}b) \leq 1$.
	Using our notation for ellipsoids, this means for all $x \in \mathcal{E}(\mu_x,\Sigma_x)$, we must have $y \in \mathcal{E}(-A^{-1} b, A^{-2})$. 
\end{proof}

\medskip

In Theorem \ref{thm: main result one layer}, we proposed a matrix inequality, in variables $(Q,A,b)$, as a sufficient condition for enclosing the output of the neural network with the ellipsoid $\mathcal{E}(-A^{-1} b, A^{-2})$. We can now use this result to find the minimum-volume ellipsoid with this property. Note that the matrix inequality \eqref{thm: main result one layer 2} is not linear in $(A,b)$. Nevertheless, we can convexify it by using Schur Complements. 
\begin{lemma} \label{lemma: schure complement}
	The matrix inequality in \eqref{thm: main result one layer 2} is equivalent to the linear matrix inequality (LMI)
	\begin{align} \label{eq: LMI}
	M \!=\! \left[
	\begin{array}{c|c}
	M_1 \!+M_2 \!-\!ee^\top & \begin{matrix}
	0_{n_x \times n_y} \\ {W^1}^\top A \\ {b^1}^\top A \!+\! b^\top
	\end{matrix} \\
	\hline
	\begin{matrix}
	0_{n_y\times n_x} & AW^1 & Ab^1\!+\!b
	\end{matrix} & -I_{n_y}
	\end{array}\right] \preceq 0,
	\end{align}
	in $(\tau,A,Q,b)$, where $e = (0,\cdots,0,1) \in \mathbb{R}^{n_x+n_1+1}$.
\end{lemma}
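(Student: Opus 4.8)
The plan is to observe that the decision variables $(A,b)$ enter \eqref{thm: main result one layer 2} only through the term $M_3 = C^\top S(A,b)\, C$, where $C = \begin{bmatrix} 0 & W^1 & b^1 \\ 0 & 0 & 1 \end{bmatrix}$ is the right factor and $C^\top$ the left factor. The terms $M_1$ and $M_2$ are already affine in $(\tau,Q)$, so I would leave them untouched and concentrate on rewriting $M_3$ as the sum of a term that is affine in $(A,b)$ and a positive-semidefinite quadratic term, which a Schur complement can then linearize.

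The key algebraic step is to factor $S(A,b)$ as a rank update. Since $A$ is symmetric, I would write $S(A,b) = \begin{bmatrix} A \\ b^\top \end{bmatrix}\begin{bmatrix} A & b \end{bmatrix} - \hat e\, \hat e^\top$, where $\hat e = (0,\dots,0,1) \in \mathbb{R}^{n_y+1}$; multiplying out recovers the blocks $A^2$, $Ab$, and $b^\top b - 1$ exactly. Substituting into $M_3 = C^\top S(A,b)\,C$ and setting $G := \begin{bmatrix} A & b \end{bmatrix} C = \begin{bmatrix} 0 & AW^1 & Ab^1 + b \end{bmatrix}$ gives $M_3 = G^\top G - C^\top \hat e\, \hat e^\top C$. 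Because $\hat e^\top C$ is the last row $[\,0\ \ 0\ \ 1\,]$ of $C$, the correction collapses to precisely $ee^\top$ with $e = (0,\dots,0,1)\in\mathbb{R}^{n_x+n_1+1}$, the vector named in the statement. Hence \eqref{thm: main result one layer 2} is identical to $(M_1 + M_2 - ee^\top) + G^\top G \preceq 0$.

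With this rewriting the claim follows from the standard Schur complement. Writing $N := M_1 + M_2 - ee^\top$, the inequality $N + G^\top G \preceq 0$ is equivalent, because $-I_{n_y} \prec 0$, to negative semidefiniteness of the bordered matrix $\begin{bmatrix} N & G^\top \\ G & -I_{n_y} \end{bmatrix}$, whose Schur complement with respect to the $(-I_{n_y})$ block is exactly $N - G^\top(-I_{n_y})^{-1}G = N + G^\top G$. Transposing $G$ and using $A = A^\top$ to express the off-diagonal block as $\begin{bmatrix} 0 \\ {W^1}^\top A \\ {b^1}^\top A + b^\top \end{bmatrix}$ reproduces the matrix $M$ of \eqref{eq: LMI}. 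Moreover $G$ is affine in $(A,b)$ while $N$ is affine in $(\tau,Q)$, so the bordered matrix is affine in all variables $(\tau,A,Q,b)$, which is what makes it an LMI.

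The one place demanding care, and the main obstacle, is the two bookkeeping identities in the second step: verifying that $\begin{bmatrix} A & b \end{bmatrix} C$ has the stated block form, and, more subtly, that $C^\top \hat e\, \hat e^\top C = ee^\top$, so that the constant $-1$ buried inside $b^\top b - 1$ lands on the correct diagonal entry and detaches as $-ee^\top$. Once these block computations are checked against the declared dimensions, the factorization of $S$ and the Schur complement are entirely routine.
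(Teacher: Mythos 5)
Your proposal is correct and follows essentially the same route as the paper: the paper likewise writes $M_3 = FF^\top - ee^\top$ with $F = \bigl[\,0_{n_x\times n_y};\ {W^1}^\top A;\ {b^1}^\top A + b^\top\,\bigr]$ (your $G^\top$) and then applies the Schur complement with respect to the $-I_{n_y}$ block. Your derivation of this factorization by first splitting $S(A,b)$ as a rank update and pushing it through the congruence is just a more explicit justification of the same identity.
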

\begin{proof}
	It is not hard to verify that $M_3$ can be written as
	$M_3 =FF^\top-ee^\top$, 
	where $F$, affine in $(A,b)$, is given by
	$$
	F(A,b) = \begin{bmatrix}
	0_{n_x \times n_y} \\ {W^1}^\top A \\ {b^1}^\top A + b^\top
	\end{bmatrix}.
	$$
	Using this definition, the matrix inequality in \eqref{thm: main result one layer 2} reads
	%
	$(M_1 + M_2-ee^\top)+ F F^\top \preceq 0$,
	which implies that the term in the parentheses must be non-negative, i.e., 
	$
	M_1 + M_2 -ee^\top \preceq 0.
	$
	Using Schur Complements, the last two inequalities are equivalent to \eqref{eq: LMI}.
\end{proof}
%

Having established Lemma \ref{lemma: schure complement}, we can now find the minimum-volume covering ellipsoid by solving the following semidefinite program (SDP),
	\begin{alignat}{2} \label{eq: SDP}
	&\text{minimize} \quad && -\log\det(A) \ \text{ subject to } \ \eqref{eq: LMI}.
	\end{alignat}
	where the decision variables are $(\tau,A,Q,b) \in \mathbb{R}_{+} \times \mathbb{S}^{n_y} \times \mathcal{Q} \times \mathbb{R}^{n_y}$.
Since $\mathcal{Q}$ is a convex cone, \eqref{eq: SDP} is a convex program and can be solved via interior-point method solvers. 

\subsection{Multi-layer Case} \label{sec: Multi-layer Case}
For multi-layer neural networks, we can apply the result of Theorem \ref{thm: main result one layer} in a layer-by-layer fashion provided that the input confidence ellipsoid of each layer is non-degenerate. This assumption holds when for all $ 0\leq k \leq \ell-1$ we have $n_{k+1} \leq n_k$ (reduction in the width of layers), and the weight matrices $W^k \in \mathbb{R}^{n_{k+1}\times n_k}$ are full rank. To see this, we note that ellipsoids are invariant under affine subspace transformations such that
$$W^k \mathcal{E}(\mu^k, \Sigma^k) + b^k = \mathcal{E}(W^k \mu^k + b^k, W^k \Sigma^k {W^k}^\top).$$
This implies that $\Sigma_{k+1}:=W^k \Sigma^k {W^k}^\top$ is positive definite whenever $\Sigma^k$ is positive definite, implying that the ellipsoid $\mathcal{E}(\mu_{k+1},\Sigma_{k+1})$ is non-degenerate.
If the assumption $n_{k+1} \leq n_k$ is violated, we can use the compact representation of multi-layer neural networks elaborated in \cite{fazlyab2019safety} to arrive at the multi-layer couterpart of the matrix inequality in \eqref{thm: main result one layer 2}.

\section{Numerical Experiments} \label{sec: numerical experiments}
In this section, we consider a numerical experiment, in which we estimate the confidence ellipsoid of a one-layer neural network with $n_x=2$ inputs, $n_1 \in \{10,30,50\}$ hidden neurons and $n_y = 2$ outputs. We assume the input is Gaussian with $\mu_x = (1,1)$ and $\Sigma_x=\mathrm{diag}(1,2)$. The weights and biases of the network are chosen randomly. We use MATLAB, CVX \cite{grant2008cvx}, and Mosek \cite{mosek} to solve the corresponding SDP. In Figure \ref{fig: confidence ellipsoid}, we plot the estimated $0.95$-level output confidence ellipsoid along with $10^4$ sample outputs. We also plot the image of $0.95$-level input confidence ellipsoid under $f$ along with the estimated $0.95$-level output confidence ellipsoid.
\begin{figure}
	\centering
	\includegraphics[width=0.5\textwidth]{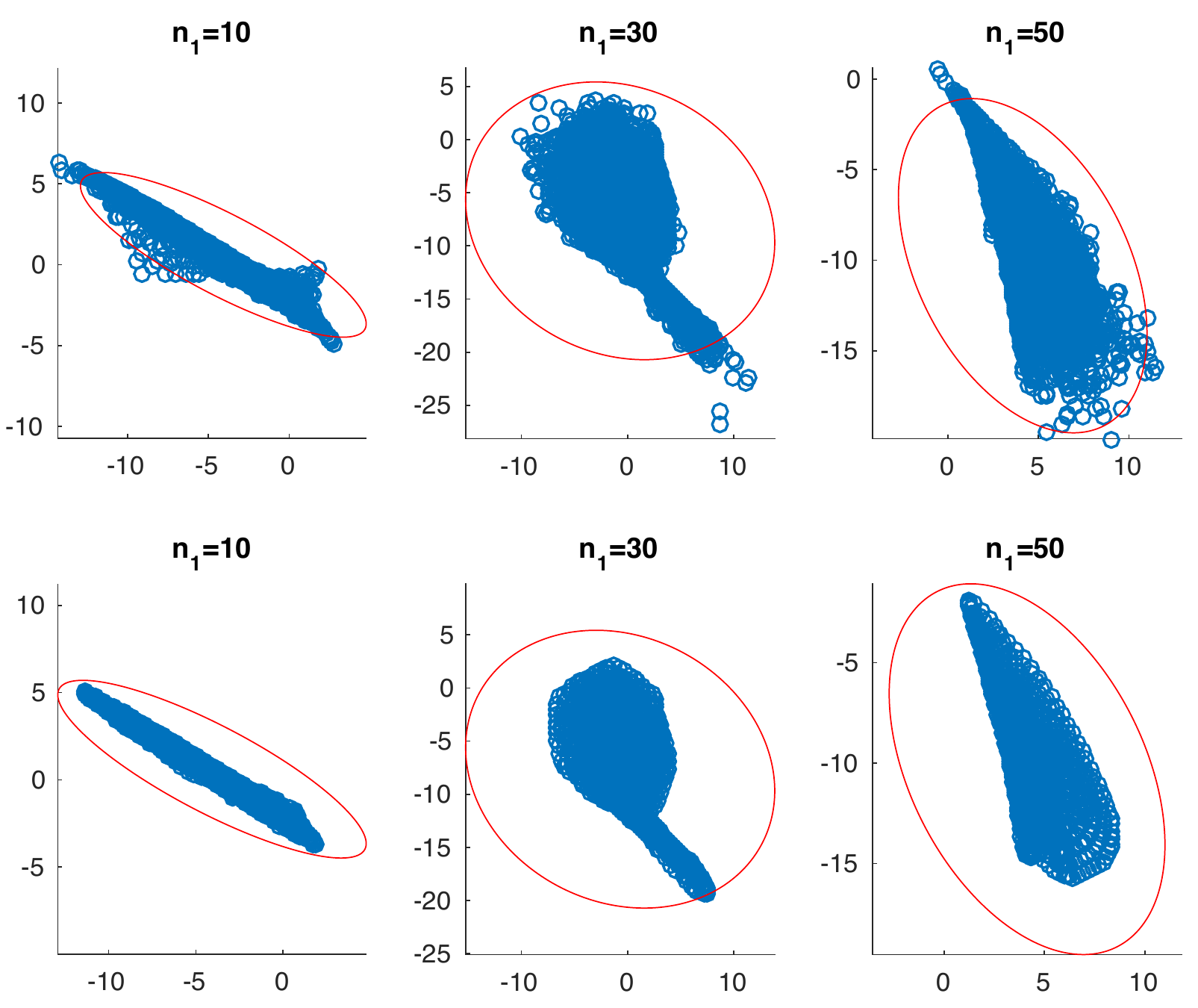}
	\caption{\small Top: the estimated $95\%$ confidence ellipsoid along with $10^4$ samples of the output. Bottom: The image of the $95\%$ input confidence ellipsoid ($f(\mathcal{E}_p)$ with $p=0.95$) and its outer approximation (the output confidence ellipsoid).}
	\label{fig: confidence ellipsoid}
\end{figure}

\section{Conclusions} \label{sec: conclusions}
We studied probabilistic safety verification of neural networks when their inputs are subject to random noise with known first two moments. Instead of analyzing the network directly, we proposed to study the safety of an abstracted network instead, in which the nonlinear activation functions are relaxed by the quadratic constraints their input-output pairs satisfy. We then showed that we can analyze the safety properties of the abstracted network using semidefinite programming. 
It would be interesting to consider other related problems such as closed-loop statistical safety verification and reachability analysis.

\bibliographystyle{ieeetr}
\bibliography{Refs_CDC_2019}

\begin{thebibliography}{10}

\bibitem{bojarski2016end}
M.~Bojarski, D.~Del~Testa, D.~Dworakowski, B.~Firner, B.~Flepp, P.~Goyal, L.~D.
  Jackel, M.~Monfort, U.~Muller, J.~Zhang, {\em et~al.}, ``End to end learning
  for self-driving cars,'' {\em arXiv preprint arXiv:1604.07316}, 2016.

\bibitem{shi2018neural}
G.~Shi, X.~Shi, M.~O'Connell, R.~Yu, K.~Azizzadenesheli, A.~Anandkumar, Y.~Yue,
  and S.-J. Chung, ``Neural lander: Stable drone landing control using learned
  dynamics,'' {\em arXiv preprint arXiv:1811.08027}, 2018.

\bibitem{zheng2016improving}
S.~Zheng, Y.~Song, T.~Leung, and I.~Goodfellow, ``Improving the robustness of
  deep neural networks via stability training,'' in {\em Proceedings of the
  ieee conference on computer vision and pattern recognition}, pp.~4480--4488,
  2016.

\bibitem{moosavi2017universal}
S.-M. Moosavi-Dezfooli, A.~Fawzi, O.~Fawzi, and P.~Frossard, ``Universal
  adversarial perturbations,'' {\em arXiv preprint}, 2017.

\bibitem{su2019one}
J.~Su, D.~V. Vargas, and K.~Sakurai, ``One pixel attack for fooling deep neural
  networks,'' {\em IEEE Transactions on Evolutionary Computation}, 2019.

\bibitem{bastani2016measuring}
O.~Bastani, Y.~Ioannou, L.~Lampropoulos, D.~Vytiniotis, A.~Nori, and
  A.~Criminisi, ``Measuring neural net robustness with constraints,'' in {\em
  Advances in neural information processing systems}, pp.~2613--2621, 2016.

\bibitem{lomuscio2017approach}
A.~Lomuscio and L.~Maganti, ``An approach to reachability analysis for
  feed-forward relu neural networks,'' {\em arXiv preprint arXiv:1706.07351},
  2017.

\bibitem{tjeng2017evaluating}
V.~Tjeng, K.~Xiao, and R.~Tedrake, ``Evaluating robustness of neural networks
  with mixed integer programming,'' {\em arXiv preprint arXiv:1711.07356},
  2017.

\bibitem{kolter2017provable}
J.~Z. Kolter and E.~Wong, ``Provable defenses against adversarial examples via
  the convex outer adversarial polytope,'' {\em arXiv preprint
  arXiv:1711.00851}, vol.~1, no.~2, p.~3, 2017.

\bibitem{dvijotham2018dual}
K.~Dvijotham, R.~Stanforth, S.~Gowal, T.~Mann, and P.~Kohli, ``A dual approach
  to scalable verification of deep networks,'' {\em arXiv preprint
  arXiv:1803.06567}, 2018.

\bibitem{pulina2012challenging}
L.~Pulina and A.~Tacchella, ``Challenging smt solvers to verify neural
  networks,'' {\em AI Communications}, vol.~25, no.~2, pp.~117--135, 2012.

\bibitem{ivanov2018verisig}
R.~Ivanov, J.~Weimer, R.~Alur, G.~J. Pappas, and I.~Lee, ``Verisig: verifying
  safety properties of hybrid systems with neural network controllers,'' {\em
  arXiv preprint arXiv:1811.01828}, 2018.

\bibitem{xiang2018output}
W.~Xiang, H.-D. Tran, and T.~T. Johnson, ``Output reachable set estimation and
  verification for multilayer neural networks,'' {\em IEEE transactions on
  neural networks and learning systems}, no.~99, pp.~1--7, 2018.

\bibitem{fazlyab2019safety}
M.~Fazlyab, M.~Morari, and G.~J. Pappas, ``Safety verification and robustness
  analysis of neural networks via quadratic constraints and semidefinite
  programming,'' {\em arXiv preprint arXiv:1903.01287}, 2019.

\bibitem{mirman2018differentiable}
M.~Mirman, T.~Gehr, and M.~Vechev, ``Differentiable abstract interpretation for
  provably robust neural networks,'' in {\em International Conference on
  Machine Learning}, pp.~3575--3583, 2018.

\bibitem{hein2017formal}
M.~Hein and M.~Andriushchenko, ``Formal guarantees on the robustness of a
  classifier against adversarial manipulation,'' in {\em Advances in Neural
  Information Processing Systems}, pp.~2266--2276, 2017.

\bibitem{wang2018efficient}
S.~Wang, K.~Pei, J.~Whitehouse, J.~Yang, and S.~Jana, ``Efficient formal safety
  analysis of neural networks,'' in {\em Advances in Neural Information
  Processing Systems}, pp.~6369--6379, 2018.

\bibitem{weng2018proven}
T.-W. Weng, P.-Y. Chen, L.~M. Nguyen, M.~S. Squillante, I.~Oseledets, and
  L.~Daniel, ``Proven: Certifying robustness of neural networks with a
  probabilistic approach,'' {\em arXiv preprint arXiv:1812.08329}, 2018.

\bibitem{dvijotham2018verification}
K.~Dvijotham, M.~Garnelo, A.~Fawzi, and P.~Kohli, ``Verification of deep
  probabilistic models,'' {\em arXiv preprint arXiv:1812.02795}, 2018.

\bibitem{bibi2018analytic}
A.~Bibi, M.~Alfadly, and B.~Ghanem, ``Analytic expressions for probabilistic
  moments of pl-dnn with gaussian input,'' in {\em Proceedings of the IEEE
  Conference on Computer Vision and Pattern Recognition}, pp.~9099--9107, 2018.

\bibitem{wabersich2018linear}
K.~P. Wabersich and M.~N. Zeilinger, ``Linear model predictive safety
  certification for learning-based control,'' in {\em 2018 IEEE Conference on
  Decision and Control (CDC)}, pp.~7130--7135, IEEE, 2018.

\bibitem{van2002conic}
D.~Van~Hessem and O.~Bosgra, ``A conic reformulation of model predictive
  control including bounded and stochastic disturbances under state and input
  constraints,'' in {\em Proceedings of the 41st IEEE Conference on Decision
  and Control, 2002.}, vol.~4, pp.~4643--4648, IEEE, 2002.

\bibitem{cannon2011stochastic}
M.~Cannon, B.~Kouvaritakis, S.~V. Rakovic, and Q.~Cheng, ``Stochastic tubes in
  model predictive control with probabilistic constraints,'' {\em IEEE
  Transactions on Automatic Control}, vol.~56, no.~1, pp.~194--200, 2011.

\bibitem{megretski1997system}
A.~Megretski and A.~Rantzer, ``System analysis via integral quadratic
  constraints,'' {\em IEEE Transactions on Automatic Control}, vol.~42, no.~6,
  pp.~819--830, 1997.

\bibitem{d2001new}
F.~D'amato, M.~A. Rotea, A.~Megretski, and U.~J{\"o}nsson, ``New results for
  analysis of systems with repeated nonlinearities,'' {\em Automatica},
  vol.~37, no.~5, pp.~739--747, 2001.

\bibitem{kulkarni2002all}
V.~V. Kulkarni and M.~G. Safonov, ``All multipliers for repeated monotone
  nonlinearities,'' {\em IEEE Transactions on Automatic Control}, vol.~47,
  no.~7, pp.~1209--1212, 2002.

\bibitem{grant2008cvx}
M.~Grant, S.~Boyd, and Y.~Ye, ``Cvx: Matlab software for disciplined convex
  programming,'' 2008.

\bibitem{mosek}
M.~ApS, {\em The MOSEK optimization toolbox for MATLAB manual. Version 8.1.},
  2017.

\end{thebibliography}

\end{document}